\providecommand{\tabularnewline}{\\}
\begin{document}
\title{\uppercase{On Basic Properties of Jumping Finite Automata}\footnote{Research supported by the Czech Science Foundation grant GA14-10799S and the GAUK grant No. 52215.}}

\author{\uppercase{Vojt\v{e}ch Vorel}}
\address{Department of Theoretical Computer Science and Mathematical Logic,\\
Charles University,\\ Malostransk\'{e} n\'{a}m. 25, Prague, Czech Republic\\ vorel@ktiml.mff.cuni.cz}

\maketitle

\begin{abstract} We complete the initial study of jumping finite automata, which was started in a former article of Meduna and Zemek \citep{athMED1}. The open questions about basic closure properties are solved. Besides this, we correct erroneous results presented in the article. Finally, we point out important relations between jumping finite automata and some other models studied in the literature. \end{abstract}

\keywords{Jumping Finite Automata, Insertion-Deletion Systems}

\section{Introduction}

In 2012, Meduna and Zemek \citep{athMED1} introduced \emph{general
jumping finite automata} as a model of discontinuous information processing.
A general jumping finite automaton (GJFA) is described by a finite
set $Q$ of states, a finite alphabet $\Sigma$, a finite set $R$
of \emph{rules }from $Q\times\Sigma^{*}\times Q$, an initial state
$s\in Q$, and a set $F\subseteq Q$ of final states. In a step of
the computation, the automaton switches from a state $q$ to a state
$r$ using a rule $\left(q,v,r\right)\in R$ and deletes a factor
equal to $v$ from any part of the input word. The choices of the
rule used and of the factor deleted are made nondeterministically.
A word can be accepted only if there is a computation resulting in
the empty word.

There is an infinite hierarchy of GJFA according to the maximum length
of factor deleted in a single step -- a GJFA is of \emph{degree} $n$
if $\left|v\right|\le n$ for each $\left(q,v,r\right)\in R$. A GJFA
of degree $1$ is called a jumping finite automaton\emph{ }(JFA).
Bold symbols $\mathbf{JFA}$ and $\mathbf{GJFA}$ denote the classes
of languages accepted by these types of automata. 

The present paper contains the following contributions:
\begin{romanlist}
\item We correct erroneous claims from \citep{athMED1} and \citep{athMED1book}
about the closure properties of the class $\mathbf{GJFA}$ -- in fact
it is neither closed under homomorphism and under inverse homomorphism.
\item We answer the open questions about closure properties of $\mathbf{GJFA}$
formulated in these two publications. Specifically, we disprove the
closure under shuffle, Kleene star, and Kleene plus, and prove the
closure under reversal.
\item We relate the new models with the existing ones, pointing out that
the expressive power of GJFA is equivalent to a basic type of graph-controlled
insertion systems, and that the intersection emptiness of GJFA is
undecidable.
\end{romanlist}

\section{Preliminaries}

As described above, a GJFA is a quintuple $M=\left(Q,\Sigma,R,s,F\right)$.
The following formal description of the computation performed by a
GJFA was introduced in \citep{athMED1}.
\begin{definition}
Any string from the language $\Sigma^{*}Q\Sigma^{*}$ is called a
\emph{configuration }of $M$. For $q,r\in Q$ and $t_{1},t_{2},t_{1}',t_{2}',v\in\Sigma^{*}$,
we write 
\[
t_{1}qvt_{2}\curvearrowright_{M}t_{1}'rt_{2}'
\]
if $t_{1}t_{2}=t_{1}'t_{2}'$ and $\left(q,v,r\right)\in R$. By $\curvearrowright_{M}^{*}$
we denote the reflexive-transitive closure of the binary relation
$\curvearrowright_{M}$ over configurations, i.e. $\mathbf{c}\curvearrowright_{M}^{*}\mathbf{c}'$
for configurations $\mathbf{c},\mathbf{c}'$ of $M$ if and only if
$\mathbf{c}=\mathbf{c}'$ or 
\[
\mathbf{d}_{1}\curvearrowright_{M}\cdots\curvearrowright_{M}\mathbf{d}_{i}
\]
for some configurations $\mathbf{d}_{1},\dots,\mathbf{d}_{i}$ of
$M$ with $\mathbf{c}=\mathbf{d}_{1}$, $\mathbf{c}'=\mathbf{d}_{i}$,
$i\ge2$. Finally,
\[
\mathrm{L}\!\left(M\right)=\left\{ u_{1}u_{2}\mid u_{1},u_{2}\in\Sigma^{*},f\in F,u_{1}su_{2}\curvearrowright_{M}^{*}f\right\} 
\]
is the language accepted by $M$. If $M$ is fixed, we write just
$\curvearrowright$ and $\curvearrowright^{*}$.
\end{definition}
The placement of the state symbol $q$ in a configuration $u_{1}qu_{2}$
marks the position of an imaginary tape head. Note that this information
is redundant -- the head is allowed to move anywhere in each step. 

In the present paper we heavily use the natural notion of sequential
insertion, as it was described e.g. in \citep{reaHAU1} and \citep{athKAR2}:
\begin{definition}
Let $K,L\subseteq\Sigma^{*}$ be languages. The \emph{insertion }of
$K$ to $L$ is 
\[
L\leftarrow K=\left\{ u_{1}vu_{2}\mid u_{1}u_{2}\in L,v\in K\right\} .
\]
More generally, for each $k\ge1$ we denote
\begin{eqnarray*}
L\leftarrow^{k}K & = & \left(L\leftarrow^{k-1}K\right)\leftarrow K,\\
L\leftarrow^{*}K & = & \bigcup_{i\ge0}L\leftarrow^{i}K,
\end{eqnarray*}
where $L\leftarrow^{0}K$ stands for $L$. In expressions with $\leftarrow$
and $\leftarrow^{*}$, a singleton set $\left\{ w\right\} $ may be
replaced by $w$. A chain $L_{1}\leftarrow L_{2}\leftarrow\cdots\leftarrow L_{d}$
of insertions is evaluated from the left, e.g. $L_{1}\leftarrow L_{2}\leftarrow L_{3}$
means $\left(L_{1}\leftarrow L_{2}\right)\leftarrow L_{3}$. According
to \citep{reaEHR2}, $L\subseteq\Sigma^{*}$ is a\emph{ unitary language}
if $L=w\leftarrow^{*}K$ for $w\in\Sigma^{*}$ and finite $K\subseteq\Sigma^{*}$.
\end{definition}
Next, we fix additional notation that turns out to be very useful
in our proofs. The notions of \emph{paths }and \emph{labels }naturally
correspond to graphical representations of GJFA, where vertices stand
for states and labeled directed edges stand for rules.
\begin{definition}
\label{def: A-path}Let $M=\left(Q,\Sigma,R,s,F\right)$ be a GJFA.
Each sequence of the form 
\[
\left(q_{0},v_{1},q_{1}\right)\left(q_{1},v_{2},q_{2}\right)\cdots\left(q_{d-1},v_{d},q_{d}\right)\in R^{*}
\]
with $d\ge1$ is a \emph{path }from $q_{0}\in Q$ to $q_{d}\in Q$.
The path is \emph{accepting }if $q_{0}=s$ and $q_{d}\in F$. The
\emph{labeling }of the path is the sequence $v_{1},v_{2},\dots,v_{d}$
of words from $\Sigma^{*}$.
\end{definition}
The lemma below is very natural, its proof only has to deal with formal
combination of different approaches. The symbol $\epsilon$ stands
for the empty word.

\begin{lemma}
\label{conf vs task}Let $M=\left(Q,\Sigma,R,s,F\right)$ be a GJFA.
For each $p,r\in Q$ and $w,w''\in\Sigma^{*}$ the following are equivalent:
\begin{romanlist}
\item \label{enu:conf}$u_{1}pu_{2}\curvearrowright_{M}^{*}u_{1}''ru_{2}''$
for $u_{1},u_{2},u_{1}'',u_{2}''\in\Sigma^{*}$ with $u_{1}u_{2}=w$
and $u_{1}''u_{2}''=w''$.
\item \label{enu:path}$p=r$ and $w=w''$, or 
\begin{equation}
w\in w''\leftarrow v_{d}\leftarrow v_{d-1}\leftarrow\cdots\leftarrow v_{2}\leftarrow v_{1},\label{eq: comp on path}
\end{equation}
where $v_{1},v_{2},\dots,v_{d}$ is a labeling of a path from $p$
to $r$, $d\ge1$.
\end{romanlist}
\end{lemma}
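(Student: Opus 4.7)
The plan is to prove both implications by mutual inductions: for (i)$\Rightarrow$(ii) I would induct on the number $i$ of steps in the derivation $u_1pu_2\curvearrowright_M^* u_1''ru_2''$; for (ii)$\Rightarrow$(i) I would induct on the path length $d$. Both base cases are immediate. If $i=0$, the two configurations coincide, forcing $p=r$ and $w=w''$. If $d=1$, then $w\in w''\leftarrow v_1$ exhibits a split $w=u_1''v_1u_2''$ with $u_1''u_2''=w''$, and the single step $u_1''pv_1u_2''\curvearrowright u_1''ru_2''$ using the rule $(p,v_1,r)$ is the desired computation.

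For the inductive step of (i)$\Rightarrow$(ii) I would peel off the first step of the derivation. Unfolding the definition of $\curvearrowright_M$, write $u_1pu_2=t_1pv_1t_2\curvearrowright\tilde{u}_1q_1\tilde{u}_2$ using some $(p,v_1,q_1)\in R$; matching the state symbol in this string equation forces $u_1=t_1$ and $u_2=v_1t_2$, so $w=t_1v_1t_2\in(\tilde{u}_1\tilde{u}_2)\leftarrow v_1$. The inductive hypothesis applied to the shorter derivation $\tilde{u}_1q_1\tilde{u}_2\curvearrowright_M^* u_1''ru_2''$ then yields either $q_1=r$ with $\tilde{u}_1\tilde{u}_2=w''$ (so $(p,v_1,r)$ is the desired path) or a path from $q_1$ to $r$ labelled $v_2,\ldots,v_d$ with $\tilde{u}_1\tilde{u}_2\in w''\leftarrow v_d\leftarrow\cdots\leftarrow v_2$; prepending the rule and using monotonicity of $\leftarrow v_1$ closes the induction. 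Dually, for (ii)$\Rightarrow$(i) with $d>1$ I would unfold the outermost insertion to get $w=t_1v_1t_2$ with $w':=t_1t_2\in w''\leftarrow v_d\leftarrow\cdots\leftarrow v_2$, perform the first step $t_1pv_1t_2\curvearrowright\tilde{u}_1q_1\tilde{u}_2$ via the rule $(p,v_1,q_1)$, and invoke the inductive hypothesis for the shorter path $v_2,\ldots,v_d$ starting from $q_1$.

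The main obstacle is aligning the head splits between successive steps. The formal definition of $\curvearrowright_M$ treats the head position asymmetrically: in $t_1qvt_2$ the deleted factor must sit immediately to the right of the state, while in the destination $t_1'rt_2'$ only the tape word $t_1't_2'$ is fixed. In the (ii)$\Rightarrow$(i) induction this asymmetry is what I need: after the first step the split $(\tilde{u}_1,\tilde{u}_2)$ of $w'$ may be chosen freely, so I can align it with the split prescribed by the inductive hypothesis for the remaining path $v_2,\ldots,v_d$. In the (i)$\Rightarrow$(ii) direction the same asymmetry is what pins the first applied rule down essentially uniquely. Beyond this alignment point, everything reduces to routine manipulation of the insertion operation $\leftarrow$ via its obvious monotonicity $w'\in L\,\Rightarrow\,w'\leftarrow v\subseteq L\leftarrow v$.
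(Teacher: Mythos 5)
Your proposal is correct and follows essentially the same route as the paper: induction on the number of computation steps for (i)$\Rightarrow$(ii), peeling off the first step, and induction on the path length $d$ for (ii)$\Rightarrow$(i), peeling off the outermost insertion, with the same observation that the freedom in the destination split of $\curvearrowright_M$ is what lets you align the head positions with the inductive hypothesis. The only cosmetic difference is that you fold the $p=r$, $w=w''$ case into the base case of the induction rather than treating it separately.
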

\begin{proof}
First, assume that (\ref{enu:conf}) holds and denote $\mathbf{c}=u_{1}pu_{2}$,
$\mathbf{c}''=u_{1}''ru_{2}''$. If $\mathbf{c}=\mathbf{c}''$, then
$p=r$ and $w=w''$, so we are done. Otherwise, $\mathbf{d}_{1}\curvearrowright_{M}\cdots\curvearrowright_{M}\mathbf{d}_{i}$
for some configurations $\mathbf{d}_{1},\dots,\mathbf{d}_{i}$ of
$M$ with $\mathbf{c}=\mathbf{d}_{1}$, $\mathbf{c}''=\mathbf{d}_{i}$,
$i\ge2$. We use induction by $i$. 

If $i=2$, then $\mathbf{c}\curvearrowright_{M}\mathbf{c}''$ and
according to the definition of $\curvearrowright_{M}$, there are
$t_{1},t_{2},t_{1}'',t_{2}'',v\in\Sigma^{*}$ such that $u_{1}=t_{1}$,
$u_{2}=vt_{2}$, $u_{1}''=t_{1}''$, $u_{2}''=t_{2}''$, and $(p,v,r)\in R$.
Thus, $w\in w''\leftarrow v$ and $\left(p,v,r\right)$ is a path
from $p$ to $r$.

If $i\ge3$, then $\mathbf{c}\curvearrowright_{M}\mathbf{c}'\curvearrowright_{M}^{*}\mathbf{c}''$
for some configuration $\mathbf{c}'=u_{1}'qu_{2}'$. Denote $w'=u_{1}'u_{2}'$.
According to the induction assumption applied to $q,r,w',w''$ we
obtain a path denoted by 
\[
\left(q_{1},v_{2},q_{2}\right)\cdots\left(q_{d-1},v_{d},q_{d}\right)
\]
with $d\ge2$, $q_{1}=q$, $q_{d}=r$, and $w'\in w''\leftarrow v_{d}\leftarrow\cdots\leftarrow v_{2}$.
To conclude the proof it is enough to show that $\left(p,v,q\right)\in R$
and $w\in w'\leftarrow v$ for $v=v_{1}$, which both follow from
$\mathbf{c}\curvearrowright_{M}\mathbf{c}'$ according to the above
analysis of the case $i=2$. 

Second, let (\ref{enu:path}) hold. If $p=r$ and $w=w''$, then $u_{1}pu_{2}=u_{1}''ru_{2}''$
for $u_{1}=u_{1}''=\epsilon$, $u_{2}=w$, and $u_{2}''=w''$. Otherwise,
we fix the path 
\[
\left(q_{0},v_{1},q_{1}\right)\left(q_{1},v_{2},q_{2}\right)\cdots\left(q_{d-1},v_{d},q_{d}\right)
\]
from $q_{0}=p$ to $q_{d}=r$, $d\ge1$, and use induction by $d$.
Denote $v=v_{1}$. 

Let $d=1$. We have $\left(p,v,r\right)\in R$ and $w\in w''\leftarrow v$.
According to the definition of $\leftarrow$, there are $t_{1},t_{2}\in\Sigma^{*}$
such that $w=t_{1}vt_{2}$ and $w''=t_{1}t_{2}$. Obviously, $t_{1}pvt_{2}\curvearrowright_{M}t_{1}rt_{2}$
follows from the definition of $\curvearrowright_{M}$. As $\curvearrowright_{M}$
is a special case of $\curvearrowright_{M}^{*}$, we are done. 

Let $d\ge2$. From (\ref{eq: comp on path}) and the definition of
$\leftarrow$ it follows that 
\begin{equation}
w\in w'\leftarrow v\label{eq: prv}
\end{equation}
for some $w'\in\Sigma^{*}$ with 
\begin{equation}
w'\in w''\leftarrow v_{d}\leftarrow v_{d-1}\leftarrow\cdots\leftarrow v_{2}.\label{eq: zby}
\end{equation}
Denote $q=q_{1}$. According to the induction assumption applied to
$q,r,w',w''$ we obtain 
\[
u_{1}'qu_{2}'\curvearrowright_{M}^{*}u_{1}''ru_{2}''
\]
for some $u_{1}',u_{2}',u_{1}'',u_{2}''\in\Sigma^{*}$ with $u_{1}'u_{2}'=w'$
and $u_{1}''u_{2}''=w''$.

It remains to show that $u_{1}pu_{2}\curvearrowright_{M}u_{1}'qu_{2}'$
for some $u_{1},u_{2}$ with $u_{1}u_{2}=w$. Due to (\ref{eq: prv})
there are $t_{1},t_{2}\in\Sigma^{*}$ such that $w=t_{1}vt_{2}$ and
$w'=t_{1}t_{2}$. From the definition of $\curvearrowright$, together
with $u_{1}'u_{2}'=w'=t_{1}t_{2}$ and $\left(p,v,q\right)\in R$
it follows that 
\[
t_{1}pvt_{2}\curvearrowright_{M}u_{1}'qu_{2}'.
\]
We conclude by denoting $u_{1}=t_{1}$ and $u_{2}=vt_{2}$.
\end{proof}

\begin{corollary}
\label{lem: paths vs accept}Let $M=\left(Q,\Sigma,R,s,F\right)$
be a GJFA and $w\in\Sigma^{*}$. Then $w\in\mathrm{L}\!\left(M\right)$
if and only if $w=\epsilon$ and $s\in F$, or 
\begin{equation}
w\in\epsilon\leftarrow v_{d}\leftarrow v_{d-1}\leftarrow\cdots\leftarrow v_{2}\leftarrow v_{1},\label{eq:comp}
\end{equation}
where $v_{1},v_{2},\dots,v_{d}$ is a labeling of an accepting path
in $M$, $d\ge1$.\end{corollary}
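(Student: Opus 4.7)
The plan is to derive the corollary as an immediate specialization of Lemma~\ref{conf vs task}, choosing $p=s$, $r=f\in F$, and $w''=\epsilon$. The key observation is that a bare state $f$ is the configuration $u_1''fu_2''$ with $u_1''=u_2''=\epsilon$, so the definition of $\mathrm{L}(M)$ translates exactly into: $w\in\mathrm{L}(M)$ iff there exist $f\in F$ and $u_1,u_2\in\Sigma^*$ with $u_1u_2=w$ and $u_1su_2\curvearrowright_M^*f$. This is precisely condition (\ref{enu:conf}) of Lemma~\ref{conf vs task} with the choice above, so (\ref{enu:path}) of that lemma yields the equivalent condition sought here.

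For the forward direction, I would assume $w\in\mathrm{L}(M)$, produce such $f$, $u_1$, $u_2$, and apply Lemma~\ref{conf vs task} to obtain either $s=f$ with $w=\epsilon$ (which gives $w=\epsilon$ and $s\in F$), or a labeling $v_1,\ldots,v_d$ of a path from $s$ to $f\in F$ with $w\in\epsilon\leftarrow v_d\leftarrow\cdots\leftarrow v_1$; in the latter case the path is accepting by definition.

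For the backward direction, if $w=\epsilon$ and $s\in F$, take $f=s$, $u_1=u_2=\epsilon$; then $u_1su_2=f$ trivially, so $w\in\mathrm{L}(M)$. Otherwise, take the final state $f=q_d$ of the accepting path, which lies in $F$; Lemma~\ref{conf vs task} then supplies $u_1,u_2,u_1'',u_2''$ with $u_1u_2=w$, $u_1''u_2''=\epsilon$ (forcing $u_1''=u_2''=\epsilon$), and $u_1su_2\curvearrowright_M^*u_1''fu_2''=f$, which directly witnesses $w\in\mathrm{L}(M)$.

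There is essentially no obstacle: the entire content has been packaged into Lemma~\ref{conf vs task}, and the only care needed is the bookkeeping that the target configuration $f$ corresponds to $w''=\epsilon$ and that ``accepting path'' coincides with ``path from $s$ to some $f\in F$.''
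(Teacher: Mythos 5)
Your proposal is correct and follows exactly the paper's own route: the paper likewise instantiates Lemma~\ref{conf vs task} with $p=s$, $r=f\in F$, and $w''=\epsilon$, using the forward implication for one direction and the backward implication for the other. The only difference is that you spell out the bookkeeping (e.g., $s=f$ with $w=\epsilon$ yielding $s\in F$, and $u_1''u_2''=\epsilon$ forcing both to be empty) more explicitly than the paper does.
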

\begin{proof}
If $w\in\mathrm{L}\!\left(M\right)$, then $w=u_{1}u_{2}$ for $u_{1},u_{2}\in\Sigma^{*}$
with $usv\curvearrowright_{M}^{*}f$ and $f\in F$. We apply the forward
implication of Lemma \ref{conf vs task} to $s$, $f$, $w$, and
$\epsilon$. On the other hand, an accepting path ends in some $f\in F$
and we apply the backward implication of Lemma \ref{conf vs task}
to $s$, $f$, $w$, and $\epsilon$.
\end{proof}
The above corollary suggests a generative approach to GJFA -- the
computation of a GJFA may be equivalently described in terms of inserting
factors instead of deleting them. A word is accepted by a GJFA if
and only if it can be composed by inserting factors to the empty word
according to labels of an accepting path. This characterization of
$\mathrm{L}\!\left(M\right)$ will be used very frequently throughout
the paper, so we omit explicit referring to Corollary \ref{lem: paths vs accept}.

Next, we give two simple lemmas that imply the membership in $\mathbf{GJFA}$
for each language that can be described using finite languages and
insertions. 
\begin{lemma}
\label{fin sub GJFA}Each finite language $K\subseteq\Sigma^{*}$
lies in $\mathbf{GJFA}$.\end{lemma}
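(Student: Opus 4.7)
The plan is to exhibit a very small GJFA whose accepting paths correspond one-to-one with the words of $K$. Writing $K = \{w_1, \ldots, w_n\}$, I would take
\[
M = \bigl(\{s,f\},\,\Sigma,\,R,\,s,\,\{f\}\bigr)
\]
with a single rule $(s, w_i, f)$ for each $i \in \{1, \ldots, n\}$, plus the convention that if $\epsilon \in K$ we simply include the rule $(s, \epsilon, f)$ (so no separate treatment is needed).

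Now the verification splits into two directions, both handled via Corollary \ref{lem: paths vs accept}. Because $s$ has no incoming rule and $f$ has no outgoing rule, every accepting path in $M$ consists of exactly one rule, and its labeling is a single word $v_1 = w_i$ for some $i$. Hence a word $w$ lies in $\mathrm{L}\!\left(M\right)$ if and only if there exists $i$ with
\[
w \in \epsilon \leftarrow w_i = \{w_i\},
\]
which is equivalent to $w \in K$. Conversely, for any $w_i \in K$ the one-step labeling $v_1 = w_i$ witnesses $w_i \in \mathrm{L}\!\left(M\right)$ through the insertion $w_i \in \epsilon \leftarrow w_i$.

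There is no genuine obstacle here; the only thing to observe carefully is the base case $\epsilon \leftarrow w_i = \{w_i\}$ of the insertion operator (including when $w_i = \epsilon$), which is immediate from the definition. The lemma is essentially a warm-up showing how Corollary \ref{lem: paths vs accept} lets one reason about GJFA languages purely in terms of path labelings and the $\leftarrow$ operator, a viewpoint the next lemmas will exploit more substantially.
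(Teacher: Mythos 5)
Your proposal is correct and uses essentially the same construction as the paper: a two-state GJFA with one rule $(s,w,f)$ per word $w\in K$, so that every accepting path has length one and its single label ranges exactly over $K$. The extra care you take with $\epsilon\in K$ and with the identity $\epsilon\leftarrow w_i=\{w_i\}$ is fine but not a departure from the paper's argument.
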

\begin{proof}
The language $K$ is accepted by the two-state GJFA $M$ with 
\begin{eqnarray*}
M & = & \left\{ \left\{ q,r\right\} ,\Sigma,R,q,\left\{ r\right\} \right\} ,\\
R & = & \left\{ \left(q,w,r\right)\mid w\in K\right\} .
\end{eqnarray*}
Indeed, all the accepting paths in $M$ are of length $1$ and their
labels are exactly the words from $K$. 
\end{proof}

\begin{lemma}
\label{GJFA closed under <- and <-hv}Let $L\subseteq\Sigma^{*}$
lie in $\mathbf{GJFA}$ and $K\subseteq\Sigma^{*}$ be finite. Then
\begin{romanlist}
\item $L\leftarrow K$ lies in $\mathbf{GJFA}$ and
\item $L\leftarrow^{*}K$ lies in $\mathbf{GJFA}$. 
\end{romanlist}
\end{lemma}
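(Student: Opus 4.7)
The plan is to leverage the generative characterization established in Corollary~\ref{lem: paths vs accept}: a word $w$ is accepted by a GJFA precisely when it can be obtained from $\epsilon$ by inserting, from the right, the labels of some accepting path. Fix a GJFA $M=(Q,\Sigma,R,s,F)$ accepting $L$ and pick a fresh state $s'\notin Q$.

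For (i), I would set $M' = (Q\cup\{s'\},\Sigma,R\cup\{(s',v,s):v\in K\},s',F)$. Since the only rules leaving $s'$ are those just added, every accepting path in $M'$ has the form $(s',v,s)\pi$, where $v\in K$ and $\pi$ is either empty (which requires $s\in F$, giving an accepting path of length $1$) or an accepting path of $M$ read from $s$. Corollary~\ref{lem: paths vs accept} then identifies $\mathrm{L}(M')$ with $\bigcup_{v\in K}(L\leftarrow v)$, which equals $L\leftarrow K$.

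For (ii), I would again introduce a fresh initial state $s'$, but now add a self-loop $(s',v,s')$ for every $v\in K$ together with a single transition $(s',\epsilon,s)$, keeping $F$ as the set of final states. Every accepting path then consists of $i\ge 0$ loop steps on $s'$, followed by $(s',\epsilon,s)$, followed by either an accepting path of $M$ from $s$ or nothing (the latter requiring $s\in F$). Because inserting the empty word is the identity on any language, the resulting labeling describes precisely the words of $L\leftarrow^{i}K$, and summing over $i\ge 0$ yields $L\leftarrow^{*}K$.

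The main points to handle carefully are the degenerate case $s\in F$ (one must check that the length-one paths in $M'$ do contribute all of $K\subseteq L\leftarrow K$, and that in the machine for (ii) the choice $i=0$ contributes $\epsilon\in L\leftarrow^{*}K$) and the fact that Definition~\ref{def: A-path} explicitly permits $\epsilon$ as a rule label, so that the transition $(s',\epsilon,s)$ in (ii) is legitimate and acts as a no-op at the level of inserted factors. Both inclusions in each equality then follow by routine translation between accepting paths and insertion chains via Corollary~\ref{lem: paths vs accept}; I do not expect any real obstacle beyond these bookkeeping checks.
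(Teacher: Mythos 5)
Your constructions are exactly those of the paper: a fresh initial state with rules labelled by $K$ into the old initial state for (i), and $K$-labelled self-loops on a fresh initial state followed by an $\epsilon$-transition for (ii), with correctness in both directions read off from Corollary~\ref{lem: paths vs accept}. The proposal is correct and essentially identical to the paper's proof, including the attention to the degenerate cases and to the legitimacy of the $\epsilon$-labelled rule.
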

\begin{proof}
Let $M_{L}=\left(Q_{L},\Sigma,R_{L},s_{L},F_{L}\right)$ be a GJFA
recognizing $L$. To obtain $M_{1}$ with $\mathrm{L}\!\left(M_{1}\right)=L\leftarrow K$,
we put 
\begin{eqnarray*}
M_{1} & = & \left(Q_{1},\Sigma,R_{1},s,F_{L}\right),\\
Q_{1} & = & Q_{L}\cup\left\{ s\right\} ,\\
R_{1} & = & \left\{ \left(s,v,s_{L}\right)\mid v\in K\right\} \cup R_{L}.
\end{eqnarray*}
First, let $w\in L\leftarrow K$, which means $w\in w'\leftarrow v$
for $w'\in L$ and $v\in K$. As $w'\in L=\mathrm{L}\!\left(M_{L}\right)$,
we have $w'=\epsilon$ and $s_{L}\in F_{L}$, or 
\[
w'\in\epsilon\leftarrow v_{d}\leftarrow\cdots\leftarrow v_{1}
\]
for some accepting path $\mathbf{p}_{L}$ in $M_{L}$ labeled by $v_{1},\dots,v_{d}\in\Sigma^{*}$,
$d\ge1$. In the case of $w'=\epsilon$ and $s_{L}\in F_{L}$, we
have $w=v$ and $w\in\mathrm{L}\!\left(M_{1}\right)$ due to $sv\curvearrowright_{M_{1}}s_{L}$.
Otherwise, observe that $\left(s,v,s_{L}\right)\mathbf{p}_{L}$ is
an accepting path in $M_{1}$ and $w$ lies in
\begin{eqnarray*}
w'\leftarrow v & \subseteq & \left(\epsilon\leftarrow v_{d}\leftarrow\cdots\leftarrow v_{1}\right)\leftarrow v,
\end{eqnarray*}
which matches the labeling of $\left(s,v,s_{L}\right)\mathbf{p}_{L}$.
Thus, $w\in\mathrm{L}\!\left(M_{1}\right)$.

Second, let $w\in\mathrm{L}\!\left(M_{1}\right)$, whence $w=\epsilon$
and $s\in F_{L}$, or 
\[
w\in\epsilon\leftarrow v_{d}\leftarrow v_{d-1}\leftarrow\cdots\leftarrow v_{2}\leftarrow v_{1},
\]
where $v_{1},v_{2},\dots,v_{d}$ is a labeling of an accepting path
in $M_{1}$, $d\ge1$. As $s\notin F_{L}$, we assume the second case.
We have $w\in w'\leftarrow v_{1}$ for some 
\[
w'\in\epsilon\leftarrow v_{d}\leftarrow v_{d-1}\leftarrow\cdots\leftarrow v_{2}.
\]
According to the construction of $M_{1}$, $v_{1}\in K$ and $v_{2},\dots,v_{d}$
is a labeling of an accepting path in $M_{L}$. Thus, $w'\in\mathrm{L}\!\left(M_{L}\right)=L$
and hence $w\in L\leftarrow K$.

To obtain $M_{2}$ with $\mathrm{L}\!\left(M_{2}\right)=L\leftarrow^{*}K$,
we put 
\begin{eqnarray*}
M_{2} & = & \left(Q_{2},\Sigma,R_{2},s,F_{L}\right),\\
Q_{2} & = & Q_{L}\cup\left\{ s\right\} ,\\
R_{2} & = & \left\{ \left(s,v,s\right)\mid v\in K\right\} \cup\left\{ \left(s,\epsilon,s_{L}\right)\right\} \cup R_{L}.
\end{eqnarray*}
First, let $w\in L\leftarrow^{*}K$, which means $w\in w'\leftarrow^{i}K$
for $w'\in L$, $i\ge0$.
\begin{romanlist}
\item If $i=0$, then $w=w'\in L$ and we show that $L\subseteq\mathrm{L}\!\left(M_{2}\right)$.
Indeed, due to Corollary \ref{lem: paths vs accept} for each $w'\in L$
it holds that $w'=\epsilon$ and $s_{L}\in F_{L}$, or 
\begin{equation}
w'\in\epsilon\leftarrow v_{d}\leftarrow v_{d-1}\leftarrow\cdots\leftarrow v_{2}\leftarrow v_{1},\label{eq:comp-1}
\end{equation}
for some accepting path $\mathbf{p}_{L}$ in $M_{L}$ labeled by $v_{1},\dots,v_{d}\in\Sigma^{*}$,
$d\ge1$. If $w'=\epsilon$ and $s_{L}\in F_{L}$, then $w'\in\mathrm{L}\!\left(M_{2}\right)$
due to Corollary \ref{lem: paths vs accept} applied to the path $\left(s,\epsilon,s_{L}\right)$
together with $w'\in\epsilon\leftarrow\epsilon$. Otherwise, $w'\in\mathrm{L}\!\left(M_{2}\right)$
due to Corollary \ref{lem: paths vs accept} applied to the path $\left(s,\epsilon,s_{L}\right)\mathbf{p}_{L}$
together with
\begin{eqnarray*}
w' & \in & \epsilon\leftarrow v_{d}\leftarrow v_{d-1}\leftarrow\cdots\leftarrow v_{2}\leftarrow v_{1},\\
 & = & \epsilon\leftarrow v_{d}\leftarrow v_{d-1}\leftarrow\cdots\leftarrow v_{2}\leftarrow v_{1}\leftarrow\epsilon.
\end{eqnarray*}

\item Let $i\ge1$. According to the definition of $\leftarrow^{i}$, we
have 
\begin{eqnarray*}
w & \in & w'\leftarrow v_{i}\leftarrow\cdots\leftarrow v_{1},
\end{eqnarray*}
where $v_{1},\dots,v_{i}\in K$ and $w'\in L$. As $L\subseteq\mathrm{L}\!\left(M_{2}\right)$
and $s\notin F_{L}$, there is an accepting path $\mathbf{p}_{L}$
in $M_{2}$ labeled by $v_{1}',\dots,v_{d}'$ with
\[
w'\in\epsilon\leftarrow v_{d}'\leftarrow\cdots\leftarrow v_{1}'.
\]
It remains to observe that 
\[
\left(s,v_{1},s\right)\cdots\left(s,v_{i},s\right)\left(s,\epsilon,s_{L}\right)\mathbf{p}_{L}
\]
is an accepting path in $M_{2}$ and 
\begin{eqnarray*}
w & \in & w'\leftarrow v_{i}\leftarrow\cdots\leftarrow v_{1}\\
 & \subseteq & \epsilon\leftarrow v_{d}'\leftarrow\cdots\leftarrow v_{1}'\leftarrow\epsilon\leftarrow v_{i}\leftarrow\cdots\leftarrow v_{1}.
\end{eqnarray*}

\end{romanlist}

\noindent Second, let $w\in\mathrm{L}\!\left(M_{2}\right)$, whence
$w=\epsilon$ and $s\in F_{L}$, or 
\[
w\in\epsilon\leftarrow v_{d}\leftarrow v_{d-1}\leftarrow\cdots\leftarrow v_{2}\leftarrow v_{1},
\]
such that there is an accepting path 
\[
\mathbf{p}=\left(q_{0},v_{1},q_{1}\right)\cdots\left(q_{d-1},v_{d},q_{d}\right)
\]
in $M_{2}$, $d\ge1$. As $s\notin F_{L}$, we assume the second case.
According to the construction of $M_{2}$, there is a unique $1\le c\le d$
such that $\left(q_{c-1},v_{c},q_{d}\right)=\left(s,\epsilon,s_{L}\right)$.
Denote $\mathbf{p}=\mathbf{p}_{K}\left(s,\epsilon,s_{L}\right)\mathbf{p}_{L}$,
where $\mathbf{p}_{K}$ is labeled by words from $K$ and $\mathbf{p}_{L}$
is an accepting path in $M_{L}$. Both $\mathbf{p}_{K}$ and $\mathbf{p}_{L}$
may stand for empty strings, their lengths are $c-1$ and $d-c$ respectively.
We get
\[
w\in w'\leftarrow\epsilon\leftarrow v_{c-1}\leftarrow\cdots\leftarrow v_{2}\leftarrow v_{1},
\]
where
\[
w'\in\epsilon\leftarrow v_{d}\leftarrow\cdots\leftarrow v_{c+1}\subseteq L.
\]
Together, $w\in L\leftarrow^{c-1}K\subseteq L\leftarrow^{*}K$.

\end{proof}
Let us give a few examples of GJFA languages that are used later in
this paper and follow easily from the above lemmas. Note that a GJFA
over an alphabet $\Sigma$ can be seen as operating over any alphabet
$\Sigma'\supseteq\Sigma$. 
\begin{example}
\label{The-following-languages}The following languages lie in $\mathbf{GJFA}$:
\begin{romanlist}
\item The trivial language $\Sigma^{*}=\epsilon\leftarrow^{*}\Sigma$ over
an arbitrary $\Sigma$.
\item The language $\Sigma^{*}u\Sigma^{*}=\Sigma^{*}\leftarrow u$ for $u\in\Sigma^{*}$
over an arbitrary $\Sigma$.
\item The Dyck language $D$ over $\Sigma=\left\{ a,\overline{a}\right\} $.
We have $D=\epsilon\leftarrow^{*}a\overline{a}$.
\item Any semi-Dyck language $D_{k}$ over $\Sigma=\left\{ a_{1},\dots,a_{k},\overline{a}_{1},\dots,\overline{a}_{k}\right\} $.
We have $D_{k}\leftarrow^{*}\left\{ a_{1}\overline{a}_{1},\dots,a_{k}\overline{a}_{k}\right\} $. 
\item Any unitary language.
\end{romanlist}
\end{example}
However, there are GJFA languages that cannot be simply obtained from
finite languages by applying Lemma \ref{GJFA closed under <- and <-hv},
such as the following classical language that is not context-free
and lies even in $\mathbf{JFA}$. By $\left|w\right|_{x}$ we denote
the number of occurrences of a letter $x\in\Sigma$ in a word $w\in\Sigma^{*}$.
\begin{example}
The JFA $M$ with
\begin{eqnarray*}
M & = & \left(\left\{ q_{0},q_{1},q_{2}\right\} ,\Sigma,R,q_{0},\left\{ q_{0}\right\} \right),\\
R & = & \left\{ \left(q_{0},a,q_{1}\right),\left(q_{1},b,q_{2}\right),\left(q_{2},c,q_{0}\right)\right\} 
\end{eqnarray*}
accepts the language $L=\left\{ w\in\Sigma^{*}|\left|w\right|_{a}=\left|w\right|_{b}=\left|w\right|_{c}\right\} $
over $\Sigma=\left\{ a,b,c\right\} ^{*}$.
\end{example}
The above example shows that the class $\mathbf{GJFA}$ is not a subclass
of context-free languages, but it was pointed out in \citep{athMED1}
that each GJFA language is context-sensitive. The class $\mathbf{GJFA}$
does not stick to classical measures of expressive power -- in the
next section we give examples of regular languages that do not lie
in $\mathbf{GJFA}$. As for JFA languages, in \citep{athMED1book}
the authors show that a language lies in $\mathbf{JFA}$ if and only
if it is equal to the permutation closure of a regular language.

\section{A Necessary Condition for Membership in GJFA}

In order to formulate our main tools for disproving membership in
GJFA, the following technical notions remain to be defined.
\begin{definition}
A language $K\subseteq\Sigma^{*}$ is a \emph{composition }if $K=\left\{ \epsilon\right\} $
or 
\[
K=\epsilon\leftarrow v_{d}\leftarrow v_{d-1}\leftarrow\cdots\leftarrow v_{2}\leftarrow v_{1},
\]
for some $v_{1},\dots,v_{d}\in\Sigma^{*}$, $d\ge1$. A composition
$K$ is of \emph{degree} $n\ge0$ if $K=\left\{ \epsilon\right\} $
or$\left|v_{i}\right|\le n$ for each $i\in\left\{ 1,\dots,d\right\} $.
For each $n\ge0$, let $\mathbf{UC}_{n}$ denote the class of languages
$L$ that can be written as 
\[
L=\bigcup_{K\in\mathcal{C}}K,
\]
where $\mathcal{C}$ is any (possibly infinite) set of compositions
of degree $n$. We also denote $\mathbf{UC}=\bigcup_{n\ge0}\mathbf{UC}_{n}$.
\end{definition}

The class\textbf{ $\mathbf{UC}$ }itself does not seem to be of practical
importance -- we use the membership in $\mathbf{UC}$ only as a necessary
condition for membership in $\mathbf{GJFA}$. The acronym \textbf{UC}
stands for \emph{union of compositions}\@.
\begin{lemma}
\label{Lem:neces zob}$\mathbf{GJFA}\subseteq\mathbf{UC}$.\end{lemma}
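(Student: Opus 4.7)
The plan is to invoke Corollary \ref{lem: paths vs accept} essentially verbatim, repackaging its conclusion in the language of compositions. Given a GJFA $M=(Q,\Sigma,R,s,F)$, let $n=\max\{|v|\mid(q,v,r)\in R\}$, the degree of $M$. I would index a family of compositions by the accepting paths of $M$: for every accepting path $\mathbf{p}$ with labeling $v_{1},v_{2},\dots,v_{d}$, set
\[
K_{\mathbf{p}}=\epsilon\leftarrow v_{d}\leftarrow v_{d-1}\leftarrow\cdots\leftarrow v_{1},
\]
and additionally adjoin the trivial composition $\{\epsilon\}$ whenever $s\in F$. Call the resulting collection $\mathcal{C}$.

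Each $K_{\mathbf{p}}$ is a composition by the very definition given just above the lemma, and since every $v_{i}$ is the word component of some rule in $R$, we have $|v_{i}|\le n$; hence every member of $\mathcal{C}$ is a composition of degree at most $n$. Note that $\mathcal{C}$ may be infinite, because $M$ may contain cycles yielding infinitely many accepting paths, but the definition of $\mathbf{UC}$ explicitly allows this.

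It then remains to verify $\mathrm{L}(M)=\bigcup_{K\in\mathcal{C}}K$, and both inclusions are immediate from Corollary \ref{lem: paths vs accept}. For the forward inclusion, if $w\in\mathrm{L}(M)$ then either $w=\epsilon$ and $s\in F$, so $w\in\{\epsilon\}\in\mathcal{C}$, or $w$ lies in $\epsilon\leftarrow v_{d}\leftarrow\cdots\leftarrow v_{1}$ for some accepting path $\mathbf{p}$, i.e.\ $w\in K_{\mathbf{p}}\in\mathcal{C}$. For the reverse inclusion, any word appearing in some $K_{\mathbf{p}}$ satisfies the hypothesis of the backward direction of the corollary with $\mathbf{p}$ as its witnessing accepting path, and any word in $\{\epsilon\}$ (included only when $s\in F$) is trivially accepted. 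This yields $\mathrm{L}(M)\in\mathbf{UC}_{n}\subseteq\mathbf{UC}$.

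There is no real obstacle: the whole content of the lemma is that the characterization of acceptance provided by Corollary \ref{lem: paths vs accept} matches the new definition of $\mathbf{UC}$ term for term. The only mild subtlety is bookkeeping around the degenerate case $w=\epsilon$, $s\in F$, which is why the trivial composition $\{\epsilon\}$ is explicitly allowed in the definition and why I include it in $\mathcal{C}$ under the appropriate condition.
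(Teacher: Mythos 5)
Your proof is correct and follows essentially the same route as the paper: both take the union of the compositions $\epsilon\leftarrow v_{d}\leftarrow\cdots\leftarrow v_{1}$ indexed by the accepting paths of $M$, bound the degree by the maximal rule label length, and appeal to Corollary \ref{lem: paths vs accept} for both inclusions. Your handling of the $\epsilon$ case (adjoining $\left\{ \epsilon\right\} $ only when $s\in F$) is in fact slightly more careful than the paper's.
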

\begin{proof}
Let $M=\left(Q,\Sigma,R,s,F\right)$ be a GJFA. Let $\mathcal{P}$
be the set of all accepting paths in $M$. According to Corollary
\ref{lem: paths vs accept}, we have 
\[
\mathrm{L}\!\left(M\right)\backslash\left\{ \epsilon\right\} =\bigcup_{\mathbf{p}\in\mathcal{P}}\left(\epsilon\leftarrow v_{\mathbf{p},d_{\mathbf{p}}}\leftarrow v_{\mathbf{p},d_{\mathbf{p}}-1}\leftarrow\cdots\leftarrow v_{\mathbf{p},2}\leftarrow v_{\mathbf{p},1}\right),
\]
where $v_{\mathbf{p},1},\dots,v_{\mathbf{p},d_{\mathbf{p}}}$ is the
labeling of $\mathbf{p}$, $d_{\mathbf{p}}\ge1$. As $\left\{ \epsilon\right\} $
is a composition, we have $\mathrm{L}\!\left(M\right)\in\mathbf{UC}_{n}$,
where $n=\max\left\{ \left|v\right|\mid q,r\in Q,\left(q,v,r\right)\in R\right\} $.
\end{proof}
The following lemma deals with the language $L=\left\{ ab\right\} ^{*}$,
which serves as a canonical non-GJFA language in the proofs of our
main results.
\begin{lemma}
\label{lem:gjfa ab star}The language $L=\left\{ ab\right\} ^{*}$
does not lie in $\mathrm{\mathbf{GJFA}}$.\end{lemma}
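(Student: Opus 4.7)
My plan is to combine Lemma \ref{Lem:neces zob} (which reduces the question to showing $L\notin\mathbf{UC}$) with a short combinatorial argument about the last insertion of a composition. Suppose for contradiction that $L\in\mathbf{UC}_{n}$ for some $n\ge 0$. Fix $m > n$; since $(ab)^{m}\in L$, it lies in some composition $K = \epsilon\leftarrow v_{d}\leftarrow\cdots\leftarrow v_{1}$ of degree $n$, with $K\subseteq L$. Trailing empty insertions can be dropped (because $K'\leftarrow\epsilon = K'$), so we may assume $v_{1}\ne\epsilon$.

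The central observation is that the last step places $v_{1}$ freely. Write $K = K'\leftarrow v_{1}$ with $K' = \epsilon\leftarrow v_{d}\leftarrow\cdots\leftarrow v_{2}$, and fix $x,y\in\Sigma^{*}$ with $xy\in K'$ and $xv_{1}y = (ab)^{m}$. By the definition of $\leftarrow$, for \emph{every} split $x'y' = xy$ the word $x'v_{1}y'$ lies in $K\subseteq L=\{ab\}^{*}$; and since $|x'v_{1}y'| = 2m$, it must equal $(ab)^{m}$. The choice $x' = \epsilon$ then forces $v_{1}$ to be a nonempty prefix of $(ab)^{m}$, so its first letter is $a$. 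Because $|xy| = 2m - |v_{1}| \ge 2m - n \ge 1$, the first letter $c_{1}$ of $xy$ exists, and the choice $x' = c_{1}$, $y' = (xy) \text{ with } c_{1} \text{ removed}$, gives a second identity $(ab)^{m} = c_{1}v_{1}y'$ in which the letter at position $1$ is the first letter of $v_{1}$, namely $a$. But position $1$ of $(ab)^{m}$ is $b$, a contradiction.

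The ``insert at every position'' step is the one nontrivial idea in the proof; everything else is bookkeeping (the reduction to $v_{1}\ne\epsilon$, the length estimate $|xy|\ge 1$, and a two-character comparison). I do not anticipate any real obstacle beyond choosing $m$ large enough relative to $n$; the only care needed is to exploit that the last insertion in a degree-$n$ composition can be performed at any position of the preceding word, even though that preceding word need not itself lie in $L$.
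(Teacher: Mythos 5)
Your proof is correct and takes essentially the same route as the paper's: both reduce to $\mathbf{UC}_{n}$ via Lemma~\ref{Lem:neces zob}, strip the trailing empty insertions to isolate the last nonempty inserted factor of a degree-$n$ composition containing a sufficiently long word of $\left\{ ab\right\} ^{*}$, and exploit that this factor can be re-inserted at an arbitrary position of the preceding word. The only cosmetic difference is the endgame: the paper produces a forbidden factor $aa$ or $bb$ by a short case analysis, whereas you compare the second letter of two forced copies of $\left(ab\right)^{m}$; these are interchangeable.
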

\begin{proof}
Assume for a contradiction that $L\in\mathbf{GJFA}$. Due to Lemma
\ref{Lem:neces zob}, $L\in\mathbf{UC}$ and thus $L\in\mathbf{UC}_{n}$
for some $n\ge0$. If $n=0$, observe that $L=\left\{ \epsilon\right\} $,
which is a contradiction. Otherwise, fix $w=\left(ab\right)^{n+1}$.
According to the definition of $\mathbf{UC}_{n}$ , $w$ lies in a
composition $K\subseteq L$ of the form
\[
K=\epsilon\leftarrow v_{d}\leftarrow v_{d-1}\leftarrow\cdots\leftarrow v_{2}\leftarrow v_{1}
\]
of degree $n$. Due to $w\neq\epsilon$, there exists the least $c$
with $d\ge c\ge1$ and $v_{c}\neq\epsilon$. Moreover,
\begin{eqnarray*}
K & = & K'\leftarrow v
\end{eqnarray*}
for suitable $K'$ and $v=v_{c}$. Thus, $w=u_{1}vu_{2}$ for $u_{1}u_{2}\in K'$.
As $\left|v\right|\le n$, at least one of the following cases holds:
\begin{romanlist}
\item Assume that $\left|u_{1}\right|\ge2$ and write $u_{1}=ab\overline{u}_{1}$.
If $v$ starts by $a$, we have $avb\overline{u}_{1}u_{2}\in K$.
If $v$ starts by $b$, we have $abv\overline{u}_{1}u_{2}\in K$.
\item Assume that $\left|u_{2}\right|\ge2$ and write $u_{2}=\overline{u}_{2}ab$.
If $v$ starts by $a$, we have $u_{1}\overline{u}_{2}avb\in K$.
If $v$ starts by $b$, we have $u_{1}\overline{u}_{2}abv\in K$.
\end{romanlist}

In each case, $K$ contains a word having some of the factors $aa$,
$bb$. Thus $K\nsubseteq L$, which is a contradiction. Informally,
each $w$ coming from a language $L\in\mathbf{UC}_{n}$ must contain
a factor $v$ of length at most $n$ that can be inserted to any other
place in $u_{1}u_{2}$ such that the result stays in $L$. In the
case of $L=\left\{ ab\right\} ^{*}$ this property fails.

\end{proof}

\section{Closure Properties of GJFA Languages}

The table below lists various unary and binary operators on languages.
The symbols $+,-$ tell that a class is closed or is not closed under
an operator, respectively. A similar table was presented in \citep{athMED1,athMED1book},
containing several question marks. In this section we complete and
correct these results. The symbol $\diamondsuit$ marks answers to
open questions and the symbol $\blacklozenge$ marks corrections.

\begin{center}
\begin{tabular}{|l|>{\raggedleft}p{10mm}>{\centering}p{3mm}|>{\centering}p{17mm}|}
\cline{2-4} 
\multicolumn{1}{l|}{} & \multicolumn{2}{c|}{$\mathbf{GJFA}$} & $\mathbf{JFA}$\tabularnewline
\hline 
Endmarking & $-$ &  & $-$\tabularnewline
\hline 
Concatenation & $-$ &  & $-$\tabularnewline
\hline 
Shuffle & $-$ & $\diamondsuit$ & $+$\tabularnewline
\hline 
Union & $+$ &  & $+$\tabularnewline
\hline 
Complement & $-$ &  & $+$\tabularnewline
\hline 
Intersection & $-$ &  & $+$\tabularnewline
\hline 
Int. with regular languages & $-$ &  & $-$\tabularnewline
\hline 
Kleene star & $-$ & $\diamondsuit$ & $-$\tabularnewline
\hline 
Kleene plus & $-$ & $\diamondsuit$ & $-$\tabularnewline
\hline 
Reversal & $+$ & $\diamondsuit$ & $+$\tabularnewline
\hline 
Substitution & $-$ &  & $-$\tabularnewline
\hline 
Regular substitution & $-$ &  & $-$\tabularnewline
\hline 
Finite substitution & $-$ & $\blacklozenge$ & $-$\tabularnewline
\hline 
Homomorphism & $-$ & $\blacklozenge$ & $-$\tabularnewline
\hline 
$\epsilon$-free homomorphism & $-$ & $\blacklozenge$ & $-$\tabularnewline
\hline 
Inverse homomorphism & $-$ & $\blacklozenge$ & $+$\tabularnewline
\hline 
\end{tabular}\medskip{}

\par\end{center}

Before proving the new results, let us deal with the closure under
intersection. The authors of \citep{athMED1,athMED1book} claim that
the theorem below follows from an immediate application of De Morgan's
laws to the results about union and complement. We find this argument
invalid and present an explicit proof of the claim.
\begin{theorem}
\label{inters}$\mathbf{GJFA}$ is not closed under intersection.\end{theorem}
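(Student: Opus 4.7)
The plan is to exhibit two languages $L_{1}, L_{2} \in \mathbf{GJFA}$ over $\Sigma = \{a, b, c\}$ whose intersection violates the necessary condition of Lemma \ref{Lem:neces zob}. I would set
\[
L_{1} = \epsilon \leftarrow^{*} \{ab, c\}, \qquad L_{2} = \epsilon \leftarrow^{*} \{bc, a\},
\]
so that $L_{1}, L_{2}$ are unitary languages and belong to $\mathbf{GJFA}$ by Lemmas \ref{fin sub GJFA} and \ref{GJFA closed under <- and <-hv}. A direct induction on the number of insertion steps shows that $L_{1}$ consists of exactly those $w \in \Sigma^{*}$ whose restriction to the letters $\{a, b\}$ is a Dyck word (with $a$ opening and $b$ closing), and symmetrically $L_{2}$ consists of those $w$ whose restriction to $\{b, c\}$ is a Dyck word (with $b$ opening and $c$ closing); in particular $a^{n} b^{n} c^{n} \in L_{1} \cap L_{2}$ for every $n \geq 0$.

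The goal is to verify $L_{1} \cap L_{2} \notin \mathbf{UC}_{k}$ for every $k \geq 0$, which via Lemma \ref{Lem:neces zob} gives $L_{1} \cap L_{2} \notin \mathbf{GJFA}$. Fix $k$ and take $w = a^{k+1} b^{k+1} c^{k+1}$ as witness. Supposing $w \in K$ for some composition $K \subseteq L_{1} \cap L_{2}$ of degree $k$, the opening of the proof of Lemma \ref{lem:gjfa ab star} yields $K = K' \leftarrow v$ with $v \neq \epsilon$, $|v| \leq k$, and $w = u_{1} v u_{2}$ for some $u_{1} u_{2} \in K'$; so $xvy \in L_{1} \cap L_{2}$ for every split $xy = u_{1} u_{2}$. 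Since $v$ is a factor of $w$ of length at most $k$, it must be of one of the shapes $a^{i}$, $b^{j}$, $c^{\ell}$, $a^{i} b^{j}$, or $b^{j} c^{\ell}$.

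The main step is to find, in each of these five cases, a split $xy = u_{1} u_{2}$ for which $xvy$ fails one of the two Dyck restrictions. A $v$ containing some $a$ is re-inserted past the whole $c$-block, which makes the $\{a, b\}$-restriction of $xvy$ start with $a^{k+1-i} b^{k+1}$ and violates the $L_{1}$ Dyck condition; a $v$ starting with $b$ or $c$ is re-inserted before the $a$-block and analogously breaks one of the two Dyck conditions. The main obstacle is the straddling case $v = a^{i} b^{j}$ with $i, j \geq 1$, for which the two boundary re-insertions preserve both restrictions; there I would place $v$ inside $c^{k+1}$ at position $k + 2 - j$, which lies in $\{0, \ldots, k + 1\}$ precisely because $j \geq 1$, so that the $\{b, c\}$-restriction of $xvy$ starts with $b^{k+1-j} c^{k+2-j}$ and breaks the $L_{2}$ Dyck condition. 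With each case dispatched we obtain an element of $K$ outside $L_{1} \cap L_{2}$, contradicting $K \subseteq L_{1} \cap L_{2}$ and proving $L_{1} \cap L_{2} \notin \mathbf{GJFA}$.
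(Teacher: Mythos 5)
Your proof is correct, but it takes a genuinely different route from the paper's. The paper stays over the binary alphabet $\left\{ a,\overline{a}\right\}$: it builds a two-state GJFA $M$ with rule labels $a\overline{a}$ and $\overline{a}a$, verifies by induction on the path length that $D\cap\mathrm{L}\!\left(M\right)=\left\{ a\overline{a}\right\} ^{*}$ for the Dyck language $D=\epsilon\leftarrow^{*}a\overline{a}$, and then simply invokes Lemma \ref{lem:gjfa ab star}, where $\left\{ ab\right\} ^{*}\notin\mathbf{UC}$ has already been established. You instead intersect two unitary languages over $\left\{ a,b,c\right\}$ and run the $\mathbf{UC}_{k}$ pumping argument of Lemma \ref{lem:gjfa ab star} directly on the intersection, with witness $a^{k+1}b^{k+1}c^{k+1}$. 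I checked your case analysis and it is complete: a nonempty factor $v$ of $a^{k+1}b^{k+1}c^{k+1}$ of length at most $k$ cannot meet both the $a$-block and the $c$-block, so the five listed shapes are exhaustive; the boundary re-insertions dispose of $a^{i}$, $b^{j}$, $c^{\ell}$ and $b^{j}c^{\ell}$; and for the straddling case $v=a^{i}b^{j}$ the placement after $c^{k+2-j}$ drives the $\left\{ b,c\right\}$-projection to the prefix $b^{k+1-j}c^{k+2-j}$ of balance $-1$, which indeed leaves $L_{2}$. What the paper's choice buys is economy and reuse: the combinatorial core is isolated in Lemma \ref{lem:gjfa ab star}, which also serves for Kleene star, homomorphism, and the other non-closure results, and the counterexample lives over two letters. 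What your choice buys is that both operands are unitary languages, i.e.\ the simplest members of $\mathbf{GJFA}$ (and of $\mathrm{ins}_{\ast}^{0,0}$), so the non-closure statement you obtain is in that sense sharper; the price is a fresh ad hoc case analysis and a three-letter alphabet.
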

\begin{proof}
Let $\Sigma=\left\{ a,\overline{a}\right\} $ and $L=\mathrm{L}\!\left(M\right)$
for 
\begin{eqnarray*}
M & = & \left(\left\{ q,r\right\} ,\Sigma,R,q,\left\{ r\right\} \right),\\
R & = & \left\{ \left(q,\overline{a}a,q\right),\left(q,a\overline{a},r\right)\right\} ,
\end{eqnarray*}
as depicted in Figure \ref{fig:The-GJFA M inters}. For each $d\ge1$
there is exactly one accepting path of length $d$ in $M$. According
to Corollary \ref{lem: paths vs accept}, we have 
\[
L=\bigcup_{d\ge1}K_{d},
\]
where $K_{1}=\epsilon\leftarrow a\overline{a}$ and $K_{i+1}=K_{i}\leftarrow\overline{a}a$
for $i\ge1$. We show that 
\begin{equation}
D\cap L=\left\{ a\overline{a}\right\} ^{*},\label{eq: intersection}
\end{equation}
where $D\in\mathbf{GJFA}$ is the Dyck language from Example \ref{The-following-languages},
and $\left\{ a\overline{a}\right\} ^{*}$ does not lie in\textbf{
$\mathbf{GJFA}$} due to Lemma \ref{lem:gjfa ab star}. The backward
inclusion is easy. As for the forward one, we have 
\begin{eqnarray*}
D\cap L & = & D\cap\bigcup_{d\ge1}K_{d}\\
 & = & \bigcup_{d\ge1}\left(D\cap K_{d}\right),
\end{eqnarray*}
so it is enough to verify that $D\cap K_{d}\subseteq\left\{ a\overline{a}\right\} ^{*}$
for each $d\ge1$. The case $d=1$ is trivial since $K_{1}=\left\{ a\overline{a}\right\} $.
In order to continue inductively, fix $d\ge2$. For any $w\in D\cap K_{d}$,
we have $w=u_{1}\overline{a}au_{2}$ for $u_{1}u_{2}\in K_{d-1}$.
From $D=\epsilon\leftarrow^{*}a\overline{a}$ it follows that $u_{1}\in DaD$,
$u_{2}\in D\overline{a}D$, and thus, $u_{1}u_{2}\in D$. By the induction
assumption, $u_{1}u_{2}\in\left\{ a\overline{a}\right\} ^{*}$. Hence
$w\in\left\{ a\overline{a}\right\} ^{*}\left(\overline{a}a\right)\left\{ a\overline{a}\right\} ^{*}$
or $w\in\left\{ a\overline{a}\right\} ^{*}a\left(\overline{a}a\right)\overline{a}\left\{ a\overline{a}\right\} ^{*}$.
The first case implies $w\notin D$, which is a contradiction, and
the second case implies $w\in\left\{ a\overline{a}\right\} ^{*}$. 
\end{proof}
The next theorem shows that some of our results actually follow very
easily from Lemma \ref{lem:gjfa ab star}, which claims that $\left\{ ab\right\} ^{*}\notin\mathbf{GJFA}$.
Theorems \ref{inv hom} and \ref{thm: GJFA unary shuffle} provide
special counter-examples for the closure under inverse homomorphism
and shuffle.
\begin{figure}
\begin{centering}
\includegraphics{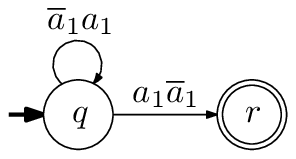}
\par\end{centering}

\caption{\label{fig:The-GJFA M inters}The GJFA $M$ with $D\cap\mathrm{L}\!\left(M\right)=\left\{ a\overline{a}\right\} ^{*}$}
\end{figure}
\begin{figure}
\begin{centering}
\includegraphics{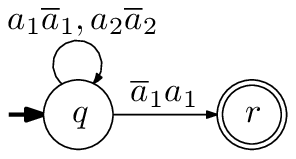}
\par\end{centering}

\caption{\label{fig:The-GJFA M}The GJFA $M$ with $\mathrm{L}\!\left(M\right)=D_{2}\overline{a}_{1}D_{2}a_{1}D_{2}$}
\end{figure}

\begin{theorem}
$\mathbf{GJFA}$ is not closed under:\end{theorem}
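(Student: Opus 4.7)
The plan is to leverage Lemma \ref{lem:gjfa ab star}, which asserts $\left\{ab\right\}^{*}\notin\mathbf{GJFA}$, as a universal obstruction. For each operation in the list, I would exhibit input languages already known to lie in $\mathbf{GJFA}$ whose image under that operation is $\left\{ab\right\}^{*}$ (or a minor variant to which the same counting argument of Lemma \ref{lem:gjfa ab star} applies), thereby obtaining a contradiction from the hypothetical closure.

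For Kleene star: the singleton $\left\{ab\right\}$ is finite, hence lies in $\mathbf{GJFA}$ by Lemma \ref{fin sub GJFA}, and its Kleene closure is $\left\{ab\right\}^{*}\notin\mathbf{GJFA}$. For Kleene plus: I would first note that the proof of Lemma \ref{lem:gjfa ab star} goes through verbatim for $\left\{ab\right\}^{+}$, since the witness word $w=(ab)^{n+1}$ still lies in $\left\{ab\right\}^{+}$ and the same case analysis on the position of the inserted factor $v$ of length at most $n$ forces a factor $aa$ or $bb$, violating the target language. Thus $\left\{ab\right\}$ serves as witness under the plus operator as well.

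For finite substitution, homomorphism, and $\epsilon$-free homomorphism, a single construction suffices. Take the unitary language $\left\{c\right\}^{*}\in\mathbf{GJFA}$ (Example \ref{The-following-languages}(i)) and define the $\epsilon$-free homomorphism $h$ by $h(c)=ab$. Then $h(\left\{c\right\}^{*})=\left\{ab\right\}^{*}\notin\mathbf{GJFA}$. Since an $\epsilon$-free homomorphism is a special case of a homomorphism, which in turn is a special case of a finite substitution, this one witness disposes of all three operations at once.

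The main obstacle is essentially non-existent here: the combinatorial content lives entirely in Lemma \ref{lem:gjfa ab star}, and the theorem reduces to a careful bookkeeping of which elementary reductions to $\left\{ab\right\}^{*}$ are available. The only subtlety is making sure the Kleene-plus case is handled, which one can either do by the direct adaptation sketched above, or alternatively by invoking closure under union (which holds for $\mathbf{GJFA}$) together with $\left\{ab\right\}^{+}\cup\left\{\epsilon\right\}=\left\{ab\right\}^{*}$ and $\left\{\epsilon\right\}\in\mathbf{GJFA}$.
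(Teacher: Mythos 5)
Your proposal is correct and follows essentially the same route as the paper: $\{ab\}\in\mathbf{GJFA}$ with $\{ab\}^{*}\notin\mathbf{GJFA}$ handles Kleene star, the union argument (the paper's choice, and your second option) handles Kleene plus, and the $\epsilon$-free homomorphism $a\mapsto ab$ applied to $\{a\}^{*}$ handles the three substitution-type operations at once. The only cosmetic difference is your alternative direct adaptation of Lemma \ref{lem:gjfa ab star} to $\{ab\}^{+}$, which is also valid but unnecessary.
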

\begin{romanlist}
\item Kleene star,
\item Kleene plus,
\item $\epsilon$-free homomorphism,
\item homomorphism,
\item finite substitution.\end{romanlist}
\begin{proof}
We have $\left\{ ab\right\} \in\mathbf{GJFA}$ and $\left\{ ab\right\} ^{*}\notin\mathbf{GJFA}$
due to Lemma \ref{lem:gjfa ab star}. As $\mathbf{GJFA}$ is closed
under union, $\left\{ ab\right\} ^{+}\notin\mathbf{GJFA}$ as well.
As for $\epsilon$-free homomorphism, consider $\varphi:\left\{ a\right\} ^{*}\rightarrow\left\{ a,b\right\} ^{*}$
with $\varphi\!\left(a\right)=ab$. We have $L=\left\{ a\right\} ^{*}\in\mathbf{GJFA}$
and $\varphi\!\left(L\right)=\left\{ ab\right\} ^{*}\notin\mathbf{GJFA}$.
Trivially, $\varphi$ is also a general homomorphism and a finite
substitution.\end{proof}
\begin{theorem}
\label{inv hom}$\mathbf{GJFA}$ is not closed under inverse homomorphism. \end{theorem}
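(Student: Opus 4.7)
The plan is to exhibit a concrete pair $(L,\varphi)$ with $L\in\mathbf{GJFA}$ and $\varphi^{-1}(L)\notin\mathbf{GJFA}$, using Lemma \ref{lem:gjfa ab star} for the latter. Guided by Figure \ref{fig:The-GJFA M}, I will take $L = D_{2}\overline{a}_{1}D_{2}a_{1}D_{2}$, accepted by the GJFA depicted there (intuitively: start from $\overline{a}_{1}a_{1}$ and insert arbitrarily many balanced pairs $a_{1}\overline{a}_{1}$ and $a_{2}\overline{a}_{2}$). I will use the homomorphism $\varphi:\{a,b\}^{*}\to\{a_{1},a_{2},\overline{a}_{1},\overline{a}_{2}\}^{*}$ defined by $\varphi(a)=\overline{a}_{1}a_{2}$ and $\varphi(b)=\overline{a}_{2}a_{1}$. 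The whole proof then reduces to establishing $\varphi^{-1}(L)=\{ab\}^{+}$, since Lemma \ref{lem:gjfa ab star} together with closure under union (already invoked in the preceding theorem) gives $\{ab\}^{+}\notin\mathbf{GJFA}$.

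For the easy inclusion $\{ab\}^{+}\subseteq\varphi^{-1}(L)$, given $w=(ab)^{k}$ with $k\geq 1$, I will exhibit the factorization $\varphi(w)=(\overline{a}_{1}a_{2}\overline{a}_{2}a_{1})^{k}=\epsilon\cdot\overline{a}_{1}\cdot\bigl((a_{2}\overline{a}_{2}a_{1}\overline{a}_{1})^{k-1}a_{2}\overline{a}_{2}\bigr)\cdot a_{1}\cdot\epsilon$, and note that the middle factor is a concatenation of the balanced pairs $a_{2}\overline{a}_{2}$ and $a_{1}\overline{a}_{1}$, hence lies in $D_{2}$.

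The substantive step is the reverse inclusion $\varphi^{-1}(L)\subseteq\{ab\}^{+}$. Suppose $\varphi(w)=u_{1}\overline{a}_{1}u_{2}a_{1}u_{3}$ with $u_{1},u_{2},u_{3}\in D_{2}$. Every nonempty word in $\varphi(\{a,b\}^{*})$ starts with a closing bracket and ends with an opening bracket, whereas a nonempty $D_{2}$-word does the opposite; consequently $u_{1}=u_{3}=\epsilon$, the distinguished $\overline{a}_{1}$ is the first letter of $\varphi(w)$, and the distinguished $a_{1}$ is the last. This forces $w_{1}=a$ and $w_{|w|}=b$, and rewrites $u_{2}$ as $a_{2}\varphi(w_{2}\cdots w_{|w|-1})\overline{a}_{2}$ which must still lie in $D_{2}$. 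A pushdown-stack parse of $u_{2}$ now constrains the middle letters: starting with stack $[2]$ after the outer $a_{2}$, each odd-position character of $\varphi(w_{2}\cdots w_{|w|-1})$ (a closing bracket) must match the current stack top, and each even-position character pushes a new bracket. An immediate induction on the position $k$ shows that the stack top is uniquely determined and forces $w_{k}=b$ for even $k$ and $w_{k}=a$ for odd $k$; hence $|w|$ is even and $w\in\{ab\}^{+}$.

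The main technical obstacle is precisely this final stack-parse argument, since one has to juggle the two bracket types together with the rigid close-then-open pattern of $\varphi$-images; but once the bookkeeping is set up the induction is essentially forced, and the theorem follows.
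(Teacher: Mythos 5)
Your proposal is correct and follows essentially the same route as the paper: the same language $L=D_{2}\overline{a}_{1}D_{2}a_{1}D_{2}$, the same homomorphism $\varphi$, the same argument forcing $u_{1}=u_{3}=\epsilon$, and an alternation argument for the middle part that is equivalent to the paper's observation that $a_{2}\overline{a}_{1}$ and $a_{1}\overline{a}_{2}$ cannot occur as factors of words in $D_{2}$. In fact your computation $\varphi^{-1}(L)=\{ab\}^{+}$ is slightly more accurate than the paper's stated $\varphi^{-1}(L)=\{ab\}^{*}$ (since $\epsilon\notin L$), and your appeal to Lemma \ref{lem:gjfa ab star} together with closure under union correctly handles this; either way the theorem follows.
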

\begin{proof}
Let $\Sigma=\left\{ a_{1},\overline{a}_{1},a_{2},\overline{a}_{2}\right\} $
and 
\begin{eqnarray*}
M & = & \left(\left\{ q,r\right\} ,\Sigma,R,q,\left\{ r\right\} \right),\\
R & = & \left\{ \left(q,a_{1}\overline{a}_{1},q\right),\left(q,a_{2}\overline{a}_{2},q\right),\left(q,\overline{a}_{1}a_{1}r\right)\right\} ,
\end{eqnarray*}
see Figure \ref{fig:The-GJFA M}. Let $L=\mathrm{L}\!\left(M\right)$.
Observe that $L=D_{2}\overline{a}_{1}D_{2}a_{1}D_{2},$ where $D_{2}$
is the semi-Dyck language with two types of brackets: $a_{1},\overline{a}_{1}$
and $a_{2},\overline{a}_{2}$. According to Example~\ref{The-following-languages},
$D_{2}\in\mathbf{GJFA}$. Let $\varphi:\left\{ a,b\right\} ^{*}\rightarrow\Sigma^{*}$
be defined as
\begin{eqnarray*}
\varphi\!\left(a\right) & = & \overline{a}_{1}a_{2},\\
\varphi\!\left(b\right) & = & \overline{a}_{2}a_{1},
\end{eqnarray*}
and let us claim that 
\[
\varphi^{-1}\!\left(L\right)=\left\{ ab\right\} ^{*},
\]
which is not a GJFA language according to Lemma \ref{lem:gjfa ab star}.

The backward inclusion is easy -- for each $v=\left(ab\right)^{i}$
with $i\ge0$ we have $\varphi\!\left(v\right)=\overline{a}_{1}\left(a_{2}\overline{a}_{2}\right)^{i-1}a_{1}$
if $i\ge1$, and $\varphi\!\left(v\right)=\epsilon$ if $i=0$. In
both cases, $\varphi\!\left(v\right)\in L$ and thus $v\in\varphi^{-1}\!\left(L\right)$. 

As for the forward inclusion, take any $v\in\varphi^{-1}\!\left(L\right)$
and fix $w\in L$ with $\varphi\!\left(v\right)=w$. As $w\in L$,
we have $w=u_{1}u_{2}u_{3}$ for $u_{1},u_{3}\in D_{2}$ and $u_{2}\in\overline{a}_{1}D_{2}a_{1}$.
As $w\in\mathrm{range}\!\left(\varphi\right)$, $w$ starts with $\overline{a}_{1}$
or $\overline{a}_{2}$ and ends with $a_{1}$ or $a_{2}$. Because
$u_{1}\in D_{2}$ cannot start with $\overline{a}_{1}$ nor $\overline{a}_{2}$
and end $u_{3}\in D_{2}$ cannot end with $a_{1}$ nor $a_{2}$, we
have $u_{1}=u_{3}=\epsilon$ and $w\in\overline{a}_{1}D_{2}a_{1}$.
Denote $v=x_{1}\cdots x_{m}$ for $x_{1},\dots,x_{m}\in\left\{ a,b\right\} $.
As $w\in\overline{a}_{1}D_{2}a_{1}$, $x_{1}=a$, $x_{m}=b$, and
\[
w=\overline{a}_{1}a_{2}\varphi\!\left(x_{2}\right)\cdots\varphi\!\left(x_{m-1}\right)\overline{a}_{2}a_{1},
\]
where
\begin{eqnarray*}
a_{2}\varphi\!\left(x_{2}\right)\cdots\varphi\!\left(x_{m-1}\right)\overline{a}_{2} & \in & D_{2}.
\end{eqnarray*}
None of the factors $a_{2}\overline{a}_{1}$ and $a_{1}\overline{a}_{2}$
can occur in $D_{2}$. It follows that $x_{2}=b$ and for each $i=2,\dots,m-2$
it holds that 
\[
x_{i}=a\,\Leftrightarrow\, x_{i+1}=b,
\]
which together implies $v\in\left\{ ab\right\} ^{*}$.
\end{proof}
For languages $K,L\subseteq\Sigma^{*}$, a word $w\in\Sigma^{*}$
belongs to $\mathrm{shuffle}\!\left(K,L\right)$ if and only if there
are words $u_{1},u_{2},\dots,u_{k},v_{1},v_{2},\dots,v_{k}\in\Sigma^{*}$
such that $u_{1}u_{2}\cdots u_{k}\in K$, $v_{1}v_{2}\cdots v_{k}\in L$,
and $w=u_{1}v_{1}u_{2}v_{2}\cdots u_{k}v_{k}$. Furthermore, we denote
$L^{\mathrm{R}}=\left\{ w^{\mathrm{R}}\mid w\in L\right\} $, where
$w^{\mathrm{R}}$ is the reversal of $w$.
\begin{theorem}
\label{thm: GJFA unary shuffle}$\mathbf{GJFA}$ is not closed under
shuffle.\end{theorem}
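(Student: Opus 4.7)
The plan is to apply the necessary condition of Lemma~\ref{Lem:neces zob} to the shuffle of two Dyck languages over disjoint alphabets. I take $L_{a}=\epsilon\leftarrow^{*}\left\{ a\overline{a}\right\}$ and $L_{b}=\epsilon\leftarrow^{*}\left\{ b\overline{b}\right\}$; both lie in $\mathbf{GJFA}$ by Example~\ref{The-following-languages}. Because the two alphabets are disjoint, the language $L=\mathrm{shuffle}\!\left(L_{a},L_{b}\right)$ consists of exactly those words $w$ over $\left\{ a,\overline{a},b,\overline{b}\right\}$ such that deleting $b,\overline{b}$ from $w$ yields a word in $L_{a}$ and deleting $a,\overline{a}$ from $w$ yields a word in $L_{b}$.

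Suppose for contradiction that $L\in\mathbf{GJFA}$, so $L\in\mathbf{UC}_{k}$ for some $k\ge 0$ by Lemma~\ref{Lem:neces zob}. Fix $n>k$ and take $w_{n}=a^{n}b^{n}\overline{a}^{n}\overline{b}^{n}$; both of its alphabet projections are balanced and properly nested, so $w_{n}\in L$. Mimicking the setup in the proof of Lemma~\ref{lem:gjfa ab star}, the word $w_{n}$ lies in a composition $K\subseteq L$ of degree $k$; extracting the outermost nonempty inserted factor gives a factorization $w_{n}=u_{1}vu_{2}$ with $\left|v\right|\le k$ such that $u_{1}'vu_{2}'\in K\subseteq L$ for every $u_{1}',u_{2}'$ satisfying $u_{1}'u_{2}'=u_{1}u_{2}$.

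The heart of the argument is a case analysis on the position of $v$ in $w_{n}$. Since $\left|v\right|\le k<n$, the factor $v$ is contained in one of the four blocks $a^{n},b^{n},\overline{a}^{n},\overline{b}^{n}$ or straddles two adjacent ones, giving seven cases. The uniform rule is: if $v$ contains at least one opening bracket ($a$ or $b$), I move $v$ to the end of $u_{1}u_{2}$; if $v$ consists only of closing brackets ($\overline{a}$ or $\overline{b}$), I move $v$ to the beginning. In each case the resulting word projects, on one of the two subalphabets, to a word whose prefix has negative bracket balance, hence not in the corresponding Dyck language. As an illustration, for $v=a^{p}$ the end-insertion produces $a^{n-p}b^{n}\overline{a}^{n}\overline{b}^{n}a^{p}$, whose $\left\{ a,\overline{a}\right\}$-projection $a^{n-p}\overline{a}^{n}a^{p}$ drops to balance $-p$ after the $\overline{a}^{n}$ segment; for $v=b^{p}\overline{a}^{q}$ the same end-insertion breaks the $\left\{ b,\overline{b}\right\}$-projection into $b^{n-p}\overline{b}^{n}b^{p}$, which drops to $-p$; the remaining cases are handled analogously.

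The main obstacle is book-keeping: making sure the seven placement cases are exhaustive and that the single rule (opening-containing factors to the end, purely closing factors to the beginning) breaks every one of them. Once this is established, we have $u_{1}'vu_{2}'\notin L$ while $u_{1}'vu_{2}'\in K$, contradicting $K\subseteq L$. Therefore $L\notin\mathbf{UC}_{k}$ for every $k$, and Lemma~\ref{Lem:neces zob} yields $L\notin\mathbf{GJFA}$, witnessing the failure of closure under shuffle.
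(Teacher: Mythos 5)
Your proof is correct and follows essentially the same route as the paper: both apply the $\mathbf{UC}_{n}$ necessary condition of Lemma~\ref{Lem:neces zob} to the witness word $a^{n}b^{n}\overline{a}^{n}\overline{b}^{n}$ in a shuffle of Dyck-type languages, extract the outermost nonempty inserted factor of length at most $n$, and relocate it (openers to the end, closers to the front) to violate bracket matching. The only cosmetic differences are that the paper shuffles the semi-Dyck language $D_{2}$ with itself over one four-letter alphabet instead of two Dyck languages over disjoint alphabets, and it collapses your seven positional cases into two by observing that the short factor cannot contain both a bracket and its partner.
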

\begin{proof}
Again, we fix $\Sigma=\left\{ a_{1},\overline{a}_{1},a_{2},\overline{a_{2}}\right\} $
and consider the semi-Dyck language $L=D_{2}\in\mathbf{GJFA}$ over
$\Sigma$. We claim that $\mathrm{shuffle}\!\left(D_{2},D_{2}\right)\notin\mathbf{GJFA}$.
According to Lemma \ref{Lem:neces zob} we assume for a contradiction
that $\mathrm{shuffle}\!\left(D_{2},D_{2}\right)\in\mathbf{UC}_{n}$
for $n\ge1$. Denote $w=a_{1}^{n}a_{2}^{n}\overline{a}_{1}^{n}\overline{a}_{2}^{n}$.
The word $w$ lies in a composition $K$ of degree $n$ having the
form $K=K'\leftarrow v$, so $w=u_{1}vu_{2}$ for $u_{1}u_{2}\in K'$.
Clearly, there is $x\in\left\{ a_{1},a_{2}\right\} $ such that at
least one of the following assumptions is fulfilled:
\begin{romanlist}
\item Assume that $v$ contains $x$. As $\left|v\right|\le n$, it cannot
contain $\overline{x}$. The word $u_{1}u_{2}v$ lies in $K$ but
it contains an occurrence of $x$ with no occurrence of $\overline{x}$
on the right, so it does not lie in $\mathrm{shuffle}\!\left(D_{2},D_{2}\right)$.
\item Assume that $v$ contains $\overline{x}$. As $\left|v\right|\le n$,
it cannot contain $x$. The word $vu_{1}u_{2}$ lies in $K$ but it
contains an occurrence of $\overline{x}$ with no occurrence $x$
on the left, so it does not lie in $\mathrm{shuffle}\!\left(D_{2},D_{2}\right)$.
\end{romanlist}
\end{proof}
\begin{lemma}
\label{rev ins}For each $K,L\subseteq\Sigma^{*}$ it holds that $\left(L\leftarrow K\right)^{\mathrm{R}}=L^{\mathrm{R}}\leftarrow K^{\mathrm{R}}$.\end{lemma}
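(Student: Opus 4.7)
My plan is to prove set equality by a direct double inclusion, exploiting the single algebraic fact that reversal is an antihomomorphism on $\Sigma^{*}$, i.e. $(xy)^{\mathrm{R}} = y^{\mathrm{R}} x^{\mathrm{R}}$ for all $x,y \in \Sigma^{*}$. Everything else is simply unfolding definitions of $\leftarrow$ and of $L^{\mathrm{R}}$.

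For the inclusion $(L \leftarrow K)^{\mathrm{R}} \subseteq L^{\mathrm{R}} \leftarrow K^{\mathrm{R}}$, I take an arbitrary $w \in (L \leftarrow K)^{\mathrm{R}}$, so $w = (u_1 v u_2)^{\mathrm{R}}$ with $u_1 u_2 \in L$ and $v \in K$. Applying the antihomomorphism property twice, $w = u_2^{\mathrm{R}} v^{\mathrm{R}} u_1^{\mathrm{R}}$. I then set $t_1 = u_2^{\mathrm{R}}$, $t_2 = u_1^{\mathrm{R}}$, and $s = v^{\mathrm{R}}$; then $t_1 t_2 = u_2^{\mathrm{R}} u_1^{\mathrm{R}} = (u_1 u_2)^{\mathrm{R}} \in L^{\mathrm{R}}$ and $s \in K^{\mathrm{R}}$, which places $w = t_1 s t_2$ in $L^{\mathrm{R}} \leftarrow K^{\mathrm{R}}$.

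The reverse inclusion is symmetric: starting from $w' = t_1 s t_2 \in L^{\mathrm{R}} \leftarrow K^{\mathrm{R}}$ with $t_1 t_2 \in L^{\mathrm{R}}$ and $s \in K^{\mathrm{R}}$, I define $u_1 = t_2^{\mathrm{R}}$, $u_2 = t_1^{\mathrm{R}}$, $v = s^{\mathrm{R}}$, observe that $u_1 u_2 = (t_1 t_2)^{\mathrm{R}} \in L$ and $v \in K$, and compute $(u_1 v u_2)^{\mathrm{R}} = t_1 s t_2 = w'$, which exhibits $w'$ as a member of $(L \leftarrow K)^{\mathrm{R}}$.

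There is no real obstacle here; the only thing to be careful about is keeping the roles of the ``outer'' parts $u_1, u_2$ and the inserted factor $v$ straight under the reversal, which swaps $u_1$ and $u_2$ while reversing each of the three pieces individually. Because the argument is fully symmetric under the involution $x \mapsto x^{\mathrm{R}}$, I would in fact phrase it once as a chain of biconditionals rather than two separate inclusions, to keep the proof to a couple of lines.
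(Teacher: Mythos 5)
Your proof is correct and follows essentially the same route as the paper's: a double inclusion that unfolds the definition of $\leftarrow$ and uses the antihomomorphism identity $(xy)^{\mathrm{R}}=y^{\mathrm{R}}x^{\mathrm{R}}$ to swap the outer parts while reversing the inserted factor. No gaps.
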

\begin{proof}
First, let $w\in\left(L\leftarrow K\right)^{\mathrm{R}}$, which means
$w=\left(u_{1}vu_{2}\right)^{\mathrm{R}}$ for $v\in K$ and $u_{1}u_{2}\in L$.
We have $\left(u_{1}vu_{2}\right)^{\mathrm{R}}=u_{2}^{\mathrm{R}}v^{\mathrm{R}}u_{1}^{\mathrm{R}}$,
while $v^{\mathrm{R}}\in K^{\mathrm{R}}$ and $u_{2}^{\mathrm{R}}u_{1}^{\mathrm{R}}=\left(u_{1}u_{2}\right)^{\mathrm{R}}\in L^{\mathrm{R}}$.
Thus, $w\in L^{\mathrm{R}}\leftarrow K^{\mathrm{R}}$. Second, let
$w\in L^{\mathrm{R}}\leftarrow K^{\mathrm{R}}$, which means $w=u_{1}vu_{2}$
for $v\in K^{\mathrm{R}}$ and $u_{1}u_{2}\in L^{\mathrm{R}}$. Thus,
$v^{\mathrm{R}}\in K$ and $\left(u_{1}u_{2}\right)^{\mathrm{R}}=u_{2}^{\mathrm{R}}u_{1}^{\mathrm{R}}\in L$.
As $w^{\mathrm{R}}=u_{2}^{\mathrm{R}}v^{\mathrm{R}}u_{1}^{\mathrm{R}}$,
it follows that $w^{\mathrm{R}}\in L\leftarrow K$ and thus $w\in\left(L\leftarrow K\right)^{\mathrm{R}}$. \end{proof}
\begin{theorem}
\label{thm:gjfa reversal}$\mathbf{GJFA}$ is closed under reversal.\end{theorem}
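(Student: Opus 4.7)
The plan is to build a GJFA $M^{\mathrm{R}}$ from $M = \left(Q,\Sigma,R,s,F\right)$ by keeping the same states, initial state, final states, and transition graph, but replacing every rule $\left(q,v,r\right)\in R$ by $\left(q,v^{\mathrm{R}},r\right)$. The intuition is that Corollary \ref{lem: paths vs accept} describes $\mathrm{L}\!\left(M\right)$ entirely through the labelings of accepting paths, and Lemma \ref{rev ins} converts reversal into a symmetric operation on insertion chains. So reversing each label individually should produce exactly $\mathrm{L}\!\left(M\right)^{\mathrm{R}}$, without any need to flip the direction of the path or swap $s$ with $F$.

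The key computation is an iterated application of Lemma \ref{rev ins}. A straightforward induction on $d$ gives
\[
\left(\epsilon\leftarrow v_{d}\leftarrow v_{d-1}\leftarrow\cdots\leftarrow v_{1}\right)^{\mathrm{R}} = \epsilon\leftarrow v_{d}^{\mathrm{R}}\leftarrow v_{d-1}^{\mathrm{R}}\leftarrow\cdots\leftarrow v_{1}^{\mathrm{R}}
\]
for all $v_{1},\dots,v_{d}\in\Sigma^{*}$, using $\epsilon^{\mathrm{R}}=\epsilon$ as the base case and peeling off a single insertion in the inductive step. Since the accepting paths of $M^{\mathrm{R}}$ correspond bijectively to those of $M$ (they share the same underlying rule skeleton), and a path whose labeling in $M$ is $v_{1},\dots,v_{d}$ has labeling $v_{1}^{\mathrm{R}},\dots,v_{d}^{\mathrm{R}}$ in $M^{\mathrm{R}}$, Corollary \ref{lem: paths vs accept} yields
\[
\mathrm{L}\!\left(M^{\mathrm{R}}\right)\setminus\left\{\epsilon\right\} = \left(\mathrm{L}\!\left(M\right)\setminus\left\{\epsilon\right\}\right)^{\mathrm{R}}.
\]
The $\epsilon$ case is immediate because $s\in F$ holds in $M$ iff it holds in $M^{\mathrm{R}}$ and $\epsilon^{\mathrm{R}}=\epsilon$.

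The main obstacle — really the only delicate point — is bookkeeping of indices. In Corollary \ref{lem: paths vs accept} the labeling order $v_{1},\dots,v_{d}$ is the opposite of the insertion order $v_{d},\dots,v_{1}$, so one must verify that reversing each label while keeping the path order unchanged (rather than reversing both) matches up correctly with Lemma \ref{rev ins}. A helpful sanity check is that an accepting path from $s$ to $f\in F$ in $M$ remains an accepting path from $s$ to the same $f\in F$ in $M^{\mathrm{R}}$, so there is genuinely no need to flip the arrow directions or swap initial and final states. Consequently the construction introduces no new states, rules, or alphabet symbols, which is as economical as one could hope.
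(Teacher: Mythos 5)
Your proposal is correct and follows essentially the same route as the paper: the identical construction (reverse each rule label, keep states, initial state, and final states), combined with Corollary \ref{lem: paths vs accept} and an iterated application of Lemma \ref{rev ins} to the insertion chain. The only cosmetic difference is that you argue both inclusions at once via the bijection of accepting paths, whereas the paper proves one inclusion and obtains the other from the involution $\mathrm{rev}\!\left(\mathrm{rev}\!\left(M\right)\right)=M$; both are sound.
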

\begin{proof}
For arbitrary GJFA $M=\left(Q,\Sigma,R,s,F\right)$, we define a GJFA
$\mathrm{rev}\!\left(M\right)$ as follows: 
\begin{eqnarray*}
\mathrm{rev}\!\left(M\right) & = & \left(Q,\Sigma,R',s,F\right),\\
R' & = & \left\{ \left(q,v^{\mathrm{R}},r\right)\mid\left(q,v,r\right)\in R\right\} .
\end{eqnarray*}
Trivially, $\mathrm{rev}\!\left(\mathrm{rev}\!\left(M\right)\right)=M$
for each GJFA $M$. We claim that 
\[
\mathrm{L}\!\left(\mathrm{rev}\!\left(M\right)\right)=\mathrm{L}\!\left(M\right)^{\mathrm{R}},
\]
which means that $\mathrm{L}\!\left(M\right)^{\mathrm{R}}$ is always
a GJFA language.

First, we show $\mathrm{L}\!\left(M\right)^{\mathrm{R}}\subseteq\mathrm{L}\!\left(\mathrm{rev}\!\left(M\right)\right)$.
Let $w\in\mathrm{L}\!\left(M\right)^{\mathrm{R}}$, i.e. $w^{\mathrm{R}}\in\mathrm{L}\!\left(M\right)$,
which means that $w^{\mathrm{R}}=\epsilon$ and $s\in F$, or 
\begin{equation}
w^{\mathrm{R}}\in\epsilon\leftarrow v_{d}\leftarrow v_{d-1}\leftarrow\cdots\leftarrow v_{2}\leftarrow v_{1},\label{eq:comp-2}
\end{equation}
where $v_{1},v_{2},\dots,v_{d}$ is a labeling of an accepting path
in $M$, $d\ge1$. If $w^{\mathrm{R}}=\epsilon$ and $s\in F$, it
follows easily that $w\in\mathrm{L}\!\left(\mathrm{rev}\!\left(M\right)\right)$.
Otherwise, observe that 
\begin{eqnarray*}
w & \in & \left(\epsilon\leftarrow v_{d}\leftarrow v_{d-1}\leftarrow\cdots\leftarrow v_{2}\leftarrow v_{1}\right)^{\mathrm{R}}\\
 & = & \epsilon\leftarrow v_{d}^{\mathrm{R}}\leftarrow v_{d-1}^{\mathrm{R}}\leftarrow\cdots\leftarrow v_{2}^{\mathrm{R}}\leftarrow v_{1}^{\mathrm{R}}
\end{eqnarray*}
according to Lemma \ref{rev ins}, and $v_{1}^{\mathrm{R}},v_{2}^{\mathrm{R}},\dots,v_{d}^{\mathrm{R}}$
is a labeling of an accepting path in $\mathrm{rev}\!\left(M\right)$.
Together, $w^{\mathrm{R}}\in\mathrm{L}\!\left(\mathrm{rev}\!\left(M\right)\right)$. 

Second, we show $\mathrm{L}\!\left(\mathrm{rev}\!\left(M\right)\right)\subseteq\mathrm{L}\!\left(M\right)^{\mathrm{R}}$.
According to the first inclusion applied to $\mathrm{rev}\!\left(M\right)$
instead of $M$, it holds that

\begin{eqnarray*}
\mathrm{L}\!\left(\mathrm{rev}\!\left(M\right)\right)^{\mathrm{R}} & \subseteq & \mathrm{L}\!\left(\mathrm{rev}\!\left(\mathrm{rev}\!\left(M\right)\right)\right)\\
 & = & \mathrm{L}\!\left(M\right),
\end{eqnarray*}
which is trivially equivalent to $\mathrm{L}\!\left(\mathrm{rev}\!\left(M\right)\right)\subseteq\mathrm{L}\!\left(M\right)^{\mathrm{R}}$.
\end{proof}

\section{\label{sec:Relations-to-Other}Relations to the Other Models}

An \emph{insertion-deletion system}, as introduced in \citep{reaPAU4},
generates words from a finite set of axioms by nondeterministic inserting
and deleting factors according to insertion rules and deletion rules.
An insertion or deletion rule may specify left and right contexts
that are needed to perform the operation. Such a system is an \emph{insertion
system }if there are no deletion rules. The language accepted by an
insertion system contains all the words that can be obtained from
the axioms using the insertion rules. Generally, an insertion-deletion
system may use nonterminals, but this does not make sense for insertion
systems.

For each $n,m,m'\ge0$, the term $\mathrm{ins}_{n}^{m,m'}$ denotes
the class of languages accepted by insertion systems where each insertion
rule inserts a factor of length at most $n$ and depends on left and
right contexts of lengths at most $m,m'$ respectively. The symbol
$\ast$ says that a parameter is not bounded. For example, $\mathrm{ins}_{\ast}^{0,0}$
contains exactly finite unions of unitary languages (one unitary language
for each axiom).

In \citep{reaALH1} and \citep{reaVER1}, the authors introduce \emph{graph-controlled
insertion systems}. Such systems may be described by a set of axioms
and a directed multigraph with edges labeled by insertion rules. The
vertices are called \emph{components}. An initial component and a
final component are specified. A graph-controlled insertion system
accepts a word if and only if the word is obtained from an axiom using
a sequence of insertion rules that forms a path from the initial component
to the final component.

For each $n,m,m',k\ge0$, the term $\mathrm{LStP}_{k}\!\left(\mathrm{ins}_{n}^{m,m'}\right)$
denotes the class of languages accepted by graph-regulated insertion
systems with at most $k$ components where the properties of insertion
rules are bounded by $n,m,m'$ as above. It turns out that\emph{ 
\[
\mathbf{GJFA}=\mathrm{LStP}_{\ast}\!\left(\mathrm{ins}_{\ast}^{0,0}\right).
\]
}Indeed, with the generative approach to GJFA in mind, a GJFA may
be transformed to a graph-regulated insertion system with the same
structure, using only the axiom $\epsilon$ and insertion rules with
empty contexts. For the backward inclusion we just encode the axioms
to rules specifying that the computation ends by deleting an axiom.

Other related models were introduced in 1980's in the scope of Galiukschov
semicontextual grammars, see \citep{reaPAU6}. For $k\ge0$, a \emph{semicontextual
grammar of degree $k$ without appearance checking} is actually an
insertion system with left and right contexts of length at most $k$.
Moreover, such grammar may involve a \emph{regular control}. In this
case, a regular language $C$ over the alphabet of insertion rules
is specified, and a string is accepted only if it can be obtained
from an axiom using a sequence of insertion rules that belongs to
$C$. The symbol $\mathcal{C}_{k}$ denotes the class of languages
accepted by \emph{regular control semicontextual grammars of degree
$k$ without appearance checking}. The equivalence of $\mathcal{C}_{0}$
and $\mathrm{LStP}_{\ast}\!\left(\mathrm{ins}_{\ast}^{0,0}\right)$
is immediate. Thus 
\[
\mathbf{GJFA}=\mathcal{C}_{0}.
\]

Finally, an interesting result follows from the fact that each unitary
language lies in $\mathbf{GJFA}$. According the main result of the
Haussler's article \citep{reaHAU1}, for a given a prefix-disjoint
instance of the Post correspondence problem (PCP)%
\footnote{We do not introduce PCP nor prefix-disjoint instances in this paper.%
} over a range alphabet $\Sigma$, one can algorithmically construct
sets $S,T\subseteq\Gamma^{*}$ over an alphabet $\Gamma$ with $\left|\Gamma\right|=2\left|\Sigma\right|+4$
such that the PCP instance is positive if and only if the intersection
of $\epsilon\leftarrow^{*}S$ and $\epsilon\leftarrow^{*}T$ contains
a non-empty string. For given $S$ and $T$, it is trivial to construct
two GJFA over $\Gamma$ accepting $\left(\epsilon\leftarrow^{*}S\right)\backslash\left\{ \epsilon\right\} $
and $T$ respectively. On the other hand, PCP can be easily reduced
to a binary range alphabet with preserving prefix-disjointness (see,
e.g., \citep{athSAL1}). Together, we get the following:
\begin{theorem}
Given GJFA $M_{1},M_{2}$ over an $8$-letter alphabet, it is undecidable
whether $\mathrm{L}\!\left(M_{1}\right)\cap\mathrm{L}\!\left(M_{2}\right)=\emptyset$.
\end{theorem}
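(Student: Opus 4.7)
My plan is to reduce the Post correspondence problem (PCP) to the intersection-emptiness problem for GJFA, following the outline in the discussion that precedes the theorem.

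First I would invoke the construction of \citep{athSAL1} to reduce an arbitrary PCP instance to a prefix-disjoint PCP instance over a binary range alphabet, obtaining $\left|\Sigma\right|=2$ while preserving both undecidability and the prefix-disjointness property required by Haussler. Applying Haussler's theorem \citep{reaHAU1} to the restricted instance then yields finite sets $S,T\subseteq\Gamma^{*}$ over an alphabet $\Gamma$ of size $2\left|\Sigma\right|+4=8$, such that the PCP instance is positive if and only if $\left(\epsilon\leftarrow^{*}S\right)\cap\left(\epsilon\leftarrow^{*}T\right)$ contains a non-empty word.

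Next I would build the two GJFA over $\Gamma$. Since $\left\{\epsilon\right\}\in\mathbf{GJFA}$ by Lemma \ref{fin sub GJFA}, Lemma \ref{GJFA closed under <- and <-hv}(ii) directly produces GJFA for $\epsilon\leftarrow^{*}S$ and for $\epsilon\leftarrow^{*}T$. Because $\epsilon$ always belongs to this intersection, I would slightly modify one of the two constructions so that its initial state is never final and is left only by a non-trivial insertion. This delivers GJFA $M_{1}$ accepting $\left(\epsilon\leftarrow^{*}S\right)\setminus\left\{\epsilon\right\}$ and $M_{2}$ accepting $\epsilon\leftarrow^{*}T$, both over the same $8$-letter alphabet $\Gamma$.

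By Haussler's equivalence, $\mathrm{L}\!\left(M_{1}\right)\cap\mathrm{L}\!\left(M_{2}\right)\neq\emptyset$ exactly when the original PCP instance is positive, so any algorithm deciding GJFA intersection emptiness would decide PCP. The non-routine ingredients are fully imported from \citep{athSAL1} and \citep{reaHAU1}; the main point of friction that remains is the arithmetic constraint $2\left|\Sigma\right|+4=8$, which is what forces the binary reduction in the very first step and is the tightest part of the whole argument. The routine elimination of the empty word is the only other subtlety.
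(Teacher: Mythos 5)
Your proposal is correct and follows essentially the same route as the paper: reduce PCP to a prefix-disjoint instance over a binary range alphabet via \citep{athSAL1}, apply Haussler's construction \citep{reaHAU1} to obtain $S,T$ over an $8$-letter alphabet, and realize $\left(\epsilon\leftarrow^{*}S\right)\setminus\left\{ \epsilon\right\}$ and $\epsilon\leftarrow^{*}T$ as GJFA languages. Your explicit appeal to Lemmas \ref{fin sub GJFA} and \ref{GJFA closed under <- and <-hv} and the removal of $\epsilon$ merely spells out what the paper dismisses as trivial.
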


\section{Conclusions}

We have completed the list of closure properties of the class $\mathbf{GJFA}$
under classical language operators. Besides of that, we have pointed
out that the class $\mathbf{GJFA}$ can be put into the frameworks
of graph-controlled insertion-deletion systems and Galiukschov semicontextual
grammars, and we have referred to a former result that implies undecidability
of intersection emptiness for GJFA languages.

It is a task for future research to provide really alternative descriptions
of the class $\mathbf{GJFA}$. There also remain open questions about
decidability, specifically regarding equivalence, universality, inclusion,
and regularity of GJFA, see \citep{athMED1book}.

\bibliographystyle{C:/Users/Vojta/Desktop/SYNCHRO2/ACT/ijfcs/ws-ijfcs}
\bibliography{C:/Users/Vojta/Desktop/SYNCHRO2/bib/ruco}

\end{document}